\documentclass[11pt]{article}

\usepackage[T1]{fontenc}
\usepackage[utf8]{inputenc}
\usepackage[margin=1.05in]{geometry}
\usepackage{lmodern}
\usepackage{microtype}
\usepackage{amsmath,amssymb,amsthm,mathtools}
\usepackage{enumitem,booktabs,array}
\usepackage{float}
\usepackage{natbib}
\usepackage{xcolor}
\usepackage{xurl}
\usepackage{hyperref}
\hypersetup{colorlinks=true,linkcolor=blue!50!black,citecolor=blue!50!black,urlcolor=blue!50!black}

\numberwithin{equation}{section}

\newtheorem{assumption}{Assumption}
\newtheorem{theorem}{Theorem}
\newtheorem{lemma}{Lemma}
\newtheorem{proposition}{Proposition}

\theoremstyle{remark}
\newtheorem{remark}{Remark}

\newcommand{\R}{\mathbb{R}}
\newcommand{\one}{\mathbf{1}}

\newcommand{\unif}{\mathrm{unif}}

\setlist[enumerate]{itemsep=0.25em,topsep=0.25em}
\setlist[itemize]{itemsep=0.25em,topsep=0.25em}

\title{Aggregating Elo Ratings: An Axiomatization\thanks{ Disclaimer. The author leads a working group responsible for the rules and structures for the Total Chess World Championship Tour. Discussions within the group, involving Norway Chess and FIDE, encouraged the author to develop a proposal for a combined rating system. This preliminary draft is released for public feedback, does not represent an official position or endorsement of Norway Chess or FIDE, and reflects the author's own views and proposals. Comments and constructive criticism are welcome at mehmet.mars.seven@kcl.ac.uk.}}
\author{Mehmet Mars Seven \\ King's College London, London, UK.}
\date{\today}

\begin{document}

\maketitle

\begin{abstract}
\noindent
Many environments assign several Elo ratings to the same agent: a chess player has classical, rapid, and blitz ratings; an online platform may rate by time control, mode, or format; an evaluator may rate performance across tasks or roles. This paper axiomatizes when such a vector of ratings can be reduced to a single scalar rating that is itself on the Elo scale. We impose three substantive conditions: same-scale normalization (a uniform profile keeps its rating), recursive consistency (aggregating in blocks gives the same answer as aggregating directly, provided each block carries the total weight of its members), and marginal Elo-strength consistency (for two equally weighted coordinates, the ratio of marginal contributions to the combined rating equals the ordinary Elo odds). The unique rating rule satisfying these conditions converts each component to its Elo strength, takes a weighted arithmetic mean of strengths, and converts back. We show how this rule differs from a random-format lottery and from rating-scale averaging, prove the axioms are independent, and illustrate the rule on combining classical, rapid, and blitz ratings. 
\end{abstract}

\medskip
\noindent\textbf{Keywords:} Elo ratings, Bradley--Terry model, paired comparisons, contest success functions, rating aggregation.

\noindent\textbf{JEL:} Z20, D71, D63

\section{Introduction}\label{sec:introduction}

Many competitive and evaluative systems assign several ratings to the same agent. A chess player may have separate classical, rapid, and blitz ratings; an online player may be rated by time control or mode; and an evaluator may maintain task-specific or role-specific ratings. This raises a natural aggregation question: when can a vector of Elo ratings be summarized by a single number that is itself interpretable as an Elo rating?

This paper studies that question for an aggregated rating: a scalar Elo rating intended to measure aggregate strength across several coordinates. In the chess application, classical, rapid, and blitz ratings are treated as components of a multi-format strength profile, and the output is a single Elo-style measure of combined strength.

The distinction between such a rating and other uses of multi-format information is important. If the goal is to predict a game in a known format, the relevant object is the format-specific rating. If the goal is to evaluate a competition in which the format is randomly selected, the natural object is a weighted average of format-specific win probabilities. The object studied here is different: it asks for one Elo-scale measure of overall multidimensional strength.

A simple candidate is the weighted arithmetic average of the ratings. But Elo ratings are logarithmic. A 400-point difference represents a tenfold ratio of underlying strengths. Averaging directly on the rating scale therefore treats a one-point gain at rating 1200 as having the same marginal force as a one-point gain at rating 2800. If the output is to remain an Elo rating, the natural aggregation scale is instead the underlying Elo-strength scale.

The paper gives an axiomatic characterization of this idea. Let
\[
R=(R_1,\ldots,R_n)
\]
be a rating profile and let
\[
\lambda=(\lambda_1,\ldots,\lambda_n)
\]
be a vector of positive primitive weights. The aggregation rule is required to satisfy three substantive conditions. First, \emph{same-scale normalization}: if all component ratings equal $r$, then the combined rating is also $r$. Second, \emph{recursive consistency}: aggregation is independent of grouping. For example, classical and rapid may be combined first into a subrating and then combined with blitz, provided the subrating carries the total weight of its components, and the final answer must be the same as aggregating all three ratings at once. Third, \emph{marginal Elo-strength consistency}: for two equally weighted coordinates, the ratio of their marginal contributions to the combined rating equals the ordinary Elo odds between them.

The main result, Theorem~\ref{thm:main}, shows that these requirements, together with mild regularity, identify a unique rule:
\begin{equation*}\label{eq:intro-combined}
C_\lambda(R)
=
400\log_{10}\!\left(
\frac{\sum_{i=1}^n\lambda_i\,10^{R_i/400}}{\sum_{i=1}^n\lambda_i}
\right).
\end{equation*}
Equivalently, each component rating is first converted into its Elo strength $10^{R_i/400}$, these strengths are averaged using the weights $\lambda_i$, and the result is converted back to the Elo scale. The denominator is essential: it ensures that a uniform profile $R_1=\cdots=R_n=r$ returns the rating $r$.

The setting connects three strands of literature. The Bradley--Terry and Elo models, together with Luce's choice axiom, provide the underlying log-odds geometry \citep{BradleyTerry1952,Elo1978,Luce1959}. Linear Tullock contest success functions produce the same ratio form in contest theory \citep{Tullock1980,Skaperdas1996}.

The paper also relates to existing practice in chess ratings. Most rating systems maintain separate ratings across time controls, including FIDE, US Chess, and major online platforms such as Chess.com and Lichess.org. Earlier proposals have explored incorporating rapid and blitz information into broader measures of playing strength \citep{Glickman1995,Sonas2002}. Systems such as the Universal Rating System (URS) instead aim to estimate a player's underlying ``classical'' strength using evidence from games played at different time controls \citep{URS2017}. The present paper studies a different object: not a target-format estimate, but a \emph{combined} strength measure across the chosen formats themselves. Given existing ratings already expressed on a common Elo scale, the paper asks which scalar aggregation rule preserves the Elo interpretation under natural consistency requirements, and derives the resulting representation theorem.

The remainder of the paper is organized as follows. Section~\ref{sec:model} sets up notation and states the axioms. Section~\ref{sec:characterization} proves the characterization theorem. Section~\ref{sec:properties} discusses the induced pairwise probabilities and basic properties of the combined rating. Section~\ref{sec:lottery-and-alternatives} compares the combined rating with random-format competitions and alternative scalar summaries. Section~\ref{sec:crb-application} applies the rule to classical, rapid, and blitz chess ratings. Section~\ref{sec:applications} discusses other applications and scope. Appendix~\ref{app:independence} proves independence of the substantive axioms.
\section{The setup}\label{sec:model}

The ordinary Elo expected score of a rating $a$ against a rating $b$ is
\begin{equation*}\label{eq:elo-E}
E(a,b)=\frac{1}{1+10^{(b-a)/400}}.
\end{equation*}
The corresponding odds are
\begin{equation}\label{eq:elo-odds}
\frac{E(a,b)}{1-E(a,b)}=10^{(a-b)/400}.
\end{equation}
Thus a 400-point rating difference corresponds to ten-to-one odds.

It is useful to write
\begin{equation*}\label{eq:q-def}
q(r)=10^{r/400},
\qquad
q^{-1}(s)=400\log_{10}s.
\end{equation*}
The quantity $q(r)$ is the Elo strength corresponding to rating $r$. The Elo expected score can then be written as
\[
E(a,b)=\frac{q(a)}{q(a)+q(b)}.
\]

A combined rating rule assigns to each positive weight vector $\lambda=(\lambda_1,\ldots,\lambda_n)\in(0,\infty)^n$ a function
\[
C_\lambda:\R^n\to\R.
\]
When player $R$ faces player $S$ under the same weights, the scalar Elo prediction generated by $C_\lambda$ is
\begin{equation*}\label{eq:combined-elo-prediction}
P_\lambda(R,S)
=
E(C_\lambda(R),C_\lambda(S))
=
\frac{1}{1+10^{(C_\lambda(S)-C_\lambda(R))/400}}.
\end{equation*}

Let
\[
\bar{\lambda}=\sum_{i=1}^n\lambda_i.
\]
For a nonempty subset $B\subseteq\{1,\ldots,n\}$, write $R_B=(R_i)_{i\in B}$ and $\lambda_B=(\lambda_i)_{i\in B}$, with the inherited coordinate order, and define
\[
\bar{\lambda}_B=\sum_{i\in B}\lambda_i.
\]

\subsection{Axioms}\label{sec:axioms}

The axioms are stated directly for the scalar Elo rating $C_\lambda$. The question is when several Elo ratings can be aggregated into a single rating that itself remains on the same Elo scale.

\begin{assumption}[Same-scale normalization]\label{ass:normalization}
For each positive weight vector $\lambda$,
\[
C_\lambda(r\one)=r
\qquad\text{for every }r\in\R.
\]
\end{assumption}

If all coordinates equal $r$, the combined rating is also $r$. This fixes the otherwise arbitrary additive level of the final Elo scale.

\begin{assumption}[Regularity and monotonicity]\label{ass:regularity}
For each dimension $n$, the map $(\lambda,R)\mapsto C_\lambda(R)$ is continuous on $(0,\infty)^n\times\R^n$. For each fixed $\lambda$, the function $C_\lambda$ is continuously differentiable and has strictly positive partial derivatives:
\[
\partial_i C_\lambda(R)>0
\qquad\text{for every }i\text{ and every }R\in\R^n.
\]
\end{assumption}

The positivity condition ensures that marginal contribution ratios are well-defined.

\begin{assumption}[Recursive consistency]\label{ass:recursive}
Let $\mathcal P=(B_1,\ldots,B_m)$ be an ordered partition of $\{1,\ldots,n\}$ into nonempty blocks. Then
\begin{equation}\label{eq:recursive}
C_\lambda(R)
=
C_{(\bar{\lambda}_{B_1},\ldots,\bar{\lambda}_{B_m})}
\left(
C_{\lambda_{B_1}}(R_{B_1}),\ldots,
C_{\lambda_{B_m}}(R_{B_m})
\right).
\end{equation}
\end{assumption}

Recursive consistency says that aggregation is path-independent. A consequence is replication invariance: splitting one coordinate of weight $\lambda_i$ into two identical subcoordinates with weights summing to $\lambda_i$ does not change the combined rating.

\begin{assumption}[Marginal Elo-strength consistency]\label{ass:marginal-elo}
For two equally weighted coordinates, the ratio of marginal contributions equals the ordinary Elo odds of one rating against the other. Formally, for all $x,y\in\R$,
\begin{equation*}\label{eq:marginal-axiom}
\frac{\partial_x C_{(1,1)}(x,y)}{\partial_y C_{(1,1)}(x,y)}
=
\frac{E(x,y)}{1-E(x,y)}.
\end{equation*}
\end{assumption}

Using \eqref{eq:elo-odds}, the right-hand side is $10^{(x-y)/400}$. Thus equal ratings have equal marginal influence; a 400-point higher coordinate has ten times the marginal influence of the lower coordinate. This is not an additional probability model. It is a local scale-selection condition: marginal importance is proportional to Elo strength.

\begin{remark}[What the marginal axiom fixes]
In the equal-weight binary case, Assumption~\ref{ass:marginal-elo} fixes a gradient direction. It implies that
\[
C_{(1,1)}(x,y)=h\bigl(10^{x/400}+10^{y/400}\bigr)
\]
for some strictly increasing function $h$. Same-scale normalization then identifies $h$. Hence the marginal axiom identifies the Elo-strength geometry, while normalization fixes the calibration of the final rating.
\end{remark}

\subsection{Immediate consequences of recursion}\label{sec:recursive-consequences}

No anonymity axiom is imposed separately. With recursive consistency stated for arbitrary ordered partitions, joint relabeling invariance follows from recursion and normalization. No weight-scale axiom is imposed either; common rescaling of all weights is also a consequence.

\begin{lemma}[Joint relabeling invariance]\label{lem:anonymity}
Under Assumptions~\ref{ass:normalization} and~\ref{ass:recursive}, for every permutation $\pi$ of $\{1,\ldots,n\}$,
\[
C_\lambda(R)
=
C_{(\lambda_{\pi(1)},\ldots,\lambda_{\pi(n)})}
(R_{\pi(1)},\ldots,R_{\pi(n)}).
\]
\end{lemma}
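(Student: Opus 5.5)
Recursive consistency is stated for \emph{ordered} partitions: the blocks $B_1,\ldots,B_m$ may be listed in any order, and each block inherits its coordinate order from $\{1,\ldots,n\}$. The plan is to use partitions into singletons together with normalization on one-coordinate aggregators, so that each block collapses to its own coordinate. The permutation then falls out from the freedom to order the blocks.

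\emph{Step 1 (one-coordinate rules are the identity).} For $n=1$ and any $\lambda_1>0$, the profile $(r)$ is the uniform profile $r\one\in\R^1$. Assumption~\ref{ass:normalization} therefore gives $C_{(\lambda_1)}(r)=r$ for every $r\in\R$.

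\emph{Step 2 (singleton partition in permuted order).} Fix a permutation $\pi$ and take the ordered partition
\[
\mathcal P=(\{\pi(1)\},\{\pi(2)\},\ldots,\{\pi(n)\}).
\]
Each block $B_k=\{\pi(k)\}$ has $\bar\lambda_{B_k}=\lambda_{\pi(k)}$. By Step~1, the inner aggregate is $C_{(\lambda_{\pi(k)})}(R_{\pi(k)})=R_{\pi(k)}$. Equation~\eqref{eq:recursive} then reads
\[
C_\lambda(R)=C_{(\lambda_{\pi(1)},\ldots,\lambda_{\pi(n)})}(R_{\pi(1)},\ldots,R_{\pi(n)}),
\]
which is exactly the claim.

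\emph{Main point to check.} The only delicate step is whether the recursive consistency axiom really allows the blocks to be listed in an order unrelated to the indexing of $\{1,\ldots,n\}$. The phrase ``ordered partition'' in Assumption~\ref{ass:recursive} means it does. If a reader took blocks to be ordered by their minimal elements, the argument would need a workaround: one would instead derive transposition invariance through two-block partitions and compose transpositions. Under the stated axiom this is unnecessary, and the proof is immediate.
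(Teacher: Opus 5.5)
Your proof is correct and is the paper's proof. You apply recursive consistency to the ordered singleton partition $(\{\pi(1)\},\ldots,\{\pi(n)\})$ and use same-scale normalization to collapse each one-coordinate block to $R_{\pi(k)}$.
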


\begin{proof}
Apply recursive consistency to the ordered singleton partition
\[
\mathcal P=(\{\pi(1)\},\ldots,\{\pi(n)\}).
\]
For singleton blocks, same-scale normalization gives
\[
C_{(\lambda_{\pi(j)})}(R_{\pi(j)})=R_{\pi(j)}.
\]
Substitution into \eqref{eq:recursive} gives the claim.
\end{proof}

\begin{lemma}[Weight-scale invariance]\label{lem:scale-invariance}
Under Assumptions~\ref{ass:normalization}--\ref{ass:recursive},
\[
C_{\alpha\lambda}(R)=C_\lambda(R)
\qquad\text{for every }\alpha>0.
\]
\end{lemma}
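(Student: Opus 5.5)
The plan is to establish the claim first for integer $\alpha$, then for rational $\alpha$, and finally for all $\alpha>0$ by continuity. The integer step is a double-counting argument: I will apply recursive consistency (Assumption~\ref{ass:recursive}) to a replicated profile under two different partitions, using same-scale normalization (Assumption~\ref{ass:normalization}) to collapse blocks whose entries are all equal.

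\emph{Integer step.} Fix $k\in\mathbb{N}$, $\lambda\in(0,\infty)^n$ and $R\in\R^n$. Consider the $kn$-dimensional profile $\tilde R=(R,R,\ldots,R)$, consisting of $k$ concatenated copies of $R$. Give it the weight vector $\tilde\lambda=(\lambda,\ldots,\lambda)$.

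First, partition $\{1,\ldots,kn\}$ into the $k$ consecutive blocks $\{(j-1)n+1,\ldots,jn\}$ for $j=1,\ldots,k$. On each block the subprofile is $R$ with weights $\lambda$, so each block rating equals $c:=C_\lambda(R)$, and each block weight is $\bar\lambda$. Recursion gives
\[
C_{\tilde\lambda}(\tilde R)=C_{(\bar\lambda,\ldots,\bar\lambda)}(c\one)=c,
\]
where the last equality is normalization.

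Second, partition into the $n$ blocks $B_i=\{i,\,n+i,\,\ldots,\,(k-1)n+i\}$ for $i=1,\ldots,n$, each with the inherited order. On $B_i$ the subprofile is $R_i\one$ with weights $(\lambda_i,\ldots,\lambda_i)$. Normalization gives block rating $R_i$, and the block weight is $k\lambda_i$. Recursion then yields $C_{\tilde\lambda}(\tilde R)=C_{k\lambda}(R)$. Comparing the two computations gives $C_{k\lambda}(R)=C_\lambda(R)$.

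\emph{Rational step.} For $\alpha=k/m$ with $k,m\in\mathbb{N}$, apply the integer case twice, with base weight $\lambda/m$ each time:
\[
C_{(k/m)\lambda}(R)=C_{\lambda/m}(R)=C_{m\cdot(\lambda/m)}(R)=C_\lambda(R).
\]

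\emph{Real step.} For general $\alpha>0$, choose rationals $\alpha_j\to\alpha$. Assumption~\ref{ass:regularity} gives joint continuity of $(\lambda,R)\mapsto C_\lambda(R)$, hence $C_{\alpha_j\lambda}(R)\to C_{\alpha\lambda}(R)$. Since each $C_{\alpha_j\lambda}(R)$ equals $C_\lambda(R)$, the limit does too.

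The only delicate point is bookkeeping in the second partition. Recursive consistency is stated for \emph{ordered} partitions with inherited coordinate order, so I must check two things. The interleaved blocks $B_i$, taken in the order $B_1,\ldots,B_n$, must produce exactly the weight vector $(k\lambda_1,\ldots,k\lambda_n)$ and the profile $(R_1,\ldots,R_n)$ in the original order. Within each block, order is irrelevant because all entries coincide. Both hold, so no appeal to Lemma~\ref{lem:anonymity} is needed. Continuity is used only in the last step, and only in the weight argument.
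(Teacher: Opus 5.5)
Your proposal is correct and follows the paper's proof essentially step for step: you compare a $k$-fold replicated profile under the copy-wise and coordinate-wise partitions to get $C_{k\lambda}=C_\lambda$, pass to rational factors by applying the integer case twice, and then use continuity in weights for general $\alpha>0$. Your explicit check that the interleaved blocks, taken in the order $B_1,\ldots,B_n$, reproduce $(k\lambda_1,\ldots,k\lambda_n)$ and $(R_1,\ldots,R_n)$ is a useful bit of bookkeeping that the paper leaves implicit in its $(i,j)$ indexing.
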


\begin{proof}
Fix $\lambda\in(0,\infty)^n$, $R\in\R^n$, and a positive integer $k$. Form an expanded $kn$-coordinate profile indexed by $(i,j)$, with
\[
\widehat\lambda_{i,j}=\lambda_i,
\qquad
\widehat R_{i,j}=R_i,
\qquad
1\le i\le n,
\quad
1\le j\le k.
\]
First partition the expanded coordinates by original coordinate. The block for coordinate $i$ has total weight $k\lambda_i$ and all ratings equal to $R_i$, so recursive consistency and normalization give
\[
C_{\widehat\lambda}(\widehat R)=C_{k\lambda}(R).
\]
Second, partition the expanded coordinates into $k$ full copies of the original profile. Each block rating is $C_\lambda(R)$, and each block has total weight $\bar{\lambda}=\sum_i\lambda_i$. Hence
\[
C_{\widehat\lambda}(\widehat R)
=
C_{(\bar{\lambda},\ldots,\bar{\lambda})}
\bigl(C_\lambda(R),\ldots,C_\lambda(R)\bigr)
=
C_\lambda(R).
\]
Thus $C_{k\lambda}(R)=C_\lambda(R)$ for every positive integer $k$. The same identity for positive rational scale factors follows by applying the integer case twice. Continuity in weights then extends the identity to every $\alpha>0$.
\end{proof}

\section{Characterization}\label{sec:characterization}

\begin{theorem}[Recursive Elo characterization]\label{thm:main}
Suppose Assumptions~\ref{ass:normalization}--\ref{ass:marginal-elo} hold. Then, for every $n\ge1$, every $\lambda\in(0,\infty)^n$, and every $R\in\R^n$,
\begin{equation}\label{eq:main-C}
C_\lambda(R)
=
400\log_{10}\left(
\frac{\sum_{i=1}^n\lambda_i10^{R_i/400}}{\sum_{i=1}^n\lambda_i}
\right).
\end{equation}
Equivalently,
\begin{equation*}\label{eq:main-Q}
q\bigl(C_\lambda(R)\bigr)
=
\frac{\sum_{i=1}^n\lambda_iq(R_i)}{\sum_{i=1}^n\lambda_i}.
\end{equation*}
Conversely, the rule \eqref{eq:main-C} satisfies Assumptions~\ref{ass:normalization}--\ref{ass:marginal-elo}. The weights are identified up to multiplication by a common positive scalar; under the normalization $\sum_i\lambda_i=1$, they are unique.
\end{theorem}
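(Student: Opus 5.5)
The plan is to work throughout on the strength scale. Define $M_\lambda(s)=q\bigl(C_\lambda(q^{-1}(s_1),\ldots,q^{-1}(s_n))\bigr)$ for $s\in(0,\infty)^n$. The theorem then says that $M_\lambda(s)=\sum_i\lambda_is_i/\bar\lambda$. Recursive consistency, normalization, Lemma~\ref{lem:anonymity} and Lemma~\ref{lem:scale-invariance} all transfer verbatim to $M$, since they only involve composing $C$ with itself. The proof proceeds in three stages: the equal-weight binary case, rational weights and general $n$ via replication, and finally continuity plus the converse.

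\emph{Stage 1: the binary equal-weight case.} By Assumption~\ref{ass:marginal-elo}, the gradient of $C_{(1,1)}$ is everywhere proportional to $(q(x),q(y))$, which is the gradient of $\phi(x,y)=q(x)+q(y)$ up to the factor $\ln 10/400$. Since $\partial_y C_{(1,1)}>0$, the function $C_{(1,1)}$ is constant along level curves of $\phi$. Each such curve $\{q(x)+q(y)=c\}$ is connected; it is the graph of a decreasing function of $x$ on $(-\infty,q^{-1}(c))$. Hence $C_{(1,1)}=h\circ\phi$, and normalization $C_{(1,1)}(r,r)=r$ forces $h(2q(r))=r$, so $h(t)=q^{-1}(t/2)$. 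Thus $M_{(1,1)}(s,t)=(s+t)/2$. By Lemma~\ref{lem:scale-invariance}, the same holds for every $(\alpha,\alpha)$.

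\emph{Stage 2: bootstrapping to rational weights and general $n$.} Let $A_\lambda$ denote the arithmetic rule. Recursion applied to a block of $2^k$ equally weighted coordinates, split into two halves of equal total weight, gives $M_{(1,\ldots,1)}=$ arithmetic mean for all $n=2^k$ by induction. For a general $n$ with equal weights, I would use the standard Cauchy padding trick. Embed the $n$ coordinates into $2^k\ge n$ coordinates, filling the extra ones with the value $m=M_{(1,\ldots,1)}(s)$. Recursion with blocks (first $n$, then each extra as a singleton), together with normalization, shows that the $2^k$-mean equals $m$. This yields $m=$ arithmetic mean.

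Integer weights $\lambda_i=k_i$ reduce to equal weights via the replication consequence of recursion: split coordinate $i$ into $k_i$ unit-weight copies, grouped as a block rated $s_i$ by normalization. Rational weights then follow from Lemma~\ref{lem:scale-invariance}, and arbitrary positive weights follow from continuity of $(\lambda,R)\mapsto C_\lambda(R)$ in Assumption~\ref{ass:regularity}.

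\emph{Stage 3: the converse and identification.} The converse is a direct check. Normalization is immediate. Recursion holds because a weighted mean of block means with block weights $\bar\lambda_B$ is the overall weighted mean. Regularity is clear. For the marginal axiom, $\partial_x C_{(1,1)}/\partial_y C_{(1,1)}=q(x)/q(y)=10^{(x-y)/400}$. For identification, if $C_\lambda=C_\mu$ for all $R$, then $q\circ C$ is linear in $q(R_i)$ with coefficients $\lambda_i/\bar\lambda$. Comparing coefficients gives $\lambda\propto\mu$.

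The main obstacle is Stage 1's passage from a gradient-direction condition to a global functional form. One must argue rigorously, via connectedness of level sets and the $C^1$ hypothesis, that $C_{(1,1)}$ factors through $\phi$. A further point of care is that the Cauchy padding step uses recursion with a mixed partition and relies on normalization for singleton blocks, so the ordering and weights of blocks must be tracked carefully.
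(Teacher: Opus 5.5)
Your proof is correct. Stages 1 and 3 coincide with the paper's Steps 1 and 5: level curves of $q(x)+q(y)$ plus normalization, then direct verification and comparison of coefficients. The real difference is how you pass from powers of two to general weights.

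The paper never treats equal weights when $n$ is not a power of two. It notes that any positive dyadic probability vector $w_i=m_i/2^k$ is realized by replicating $R_i$ exactly $m_i$ times inside a $2^k$-coordinate equal-weight profile. So the formula holds on dyadic weight vectors, which are already dense in the simplex, and continuity in $\lambda$ finishes at once.

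You instead insert Cauchy's padding step: fill up to $2^k$ coordinates with $m=M_{\one_n}(s)$, use the mixed partition and normalization, and conclude $nm=\sum_i s_i$. This gives the equal-weight formula for every $n$, then every integer weight vector, and then every rational weight vector via the integer case of Lemma~\ref{lem:scale-invariance}, which needs no continuity.

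The paper's route is one step shorter. Yours obtains the formula for all rational weights purely algebraically, so continuity in the weights is needed only to pass from rational to irrational $\lambda$. In the paper, even $\lambda=(1,1,1)$ comes out only as a limit of dyadic approximations.
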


\begin{proof}
The proof uses $q(r)=10^{r/400}$ and $q^{-1}(s)=400\log_{10}s$.

\medskip
\noindent\text{Step 1: the equal-weight binary rule.}
Let
\[
M(x,y)=C_{(1,1)}(x,y).
\]
By Assumption~\ref{ass:marginal-elo},
\[
\frac{M_x(x,y)}{M_y(x,y)}
=10^{(x-y)/400}
=\frac{q(x)}{q(y)}.
\]
Let $V(x,y)=q(x)+q(y)$. Since $V_x/V_y=q(x)/q(y)$, the gradients of $M$ and $V$ are everywhere parallel. Along any level curve of $V$,
\[
\frac{d}{dx}M(x,y(x))=M_x+M_y y'(x)=0.
\]
The level curves of $V$ are connected, so $M$ is constant on them. Therefore
\[
M(x,y)=h(q(x)+q(y))
\]
for some strictly increasing function $h$. Same-scale normalization gives $M(r,r)=r$, hence
\[
h(2q(r))=r
\qquad\text{for all }r.
\]
Since $q(\R)=(0,\infty)$,
\[
h(s)=q^{-1}\left(\frac{s}{2}\right).
\]
Thus
\begin{equation}\label{eq:binary-equal}
C_{(1,1)}(x,y)
=
q^{-1}\left(\frac{q(x)+q(y)}{2}\right).
\end{equation}

\medskip
\noindent\text{Step 2: equal weights with dyadic replication.}
For every $m=2^k$,
\begin{equation}\label{eq:equal-power-two}
C_{\one_m}(R_1,\ldots,R_m)
=
q^{-1}\left(\frac{1}{m}\sum_{i=1}^m q(R_i)\right).
\end{equation}
The case $m=1$ follows from normalization and the case $m=2$ from \eqref{eq:binary-equal}. If \eqref{eq:equal-power-two} holds for $m$, partition $2m$ equal-weight coordinates into two blocks of size $m$. Recursive consistency gives
\[
C_{\one_{2m}}(R)
=
C_{(m,m)}\left(
C_{\one_m}(R_1,\ldots,R_m),
C_{\one_m}(R_{m+1},\ldots,R_{2m})
\right).
\]
By Lemma~\ref{lem:scale-invariance}, $C_{(m,m)}=C_{(1,1)}$. Applying the induction hypothesis and \eqref{eq:binary-equal} proves \eqref{eq:equal-power-two} for $2m$.

\medskip
\noindent\text{Step 3: dyadic weights.}
Let $w=(w_1,\ldots,w_n)$ be a positive dyadic weight vector with $\sum_iw_i=1$. Then there exist positive integers $m_i$ and a power of two $m=2^k$ such that $w_i=m_i/m$ and $\sum_i m_i=m$. Repeat coordinate $R_i$ exactly $m_i$ times to form an expanded equal-weight profile $\widehat R$. By Step 2,
\[
C_{\one_m}(\widehat R)
=
q^{-1}\left(\sum_{i=1}^n w_iq(R_i)\right).
\]
Partitioning the expanded profile into blocks of identical copies and applying recursive consistency gives
\[
C_{\one_m}(\widehat R)=C_{(m_1,\ldots,m_n)}(R).
\]
By weight-scale invariance,
\begin{equation}\label{eq:dyadic}
C_w(R)=q^{-1}\left(\sum_{i=1}^n w_iq(R_i)\right).
\end{equation}

\medskip
\noindent\text{Step 4: arbitrary positive weights.}
For arbitrary $\lambda$, set $w_i=\lambda_i/\bar{\lambda}$. By Lemma~\ref{lem:scale-invariance}, $C_\lambda=C_w$. Approximate $w$ by positive dyadic probability vectors $w^{(k)}$. Continuity in weights and \eqref{eq:dyadic} yield
\[
C_w(R)
=\lim_{k\to\infty}C_{w^{(k)}}(R)
=q^{-1}\left(\sum_{i=1}^n w_iq(R_i)\right),
\]
which is \eqref{eq:main-C}.

\medskip
\noindent\text{Step 5: converse and uniqueness of weights.}
Assume $C_\lambda$ is defined by \eqref{eq:main-C}. Same-scale normalization is immediate. The rule is continuous, continuously differentiable, and strictly increasing because
\begin{equation}\label{eq:derivative}
\partial_j C_\lambda(R)=
\frac{\lambda_jq(R_j)}{\sum_{i=1}^n\lambda_iq(R_i)}>0.
\end{equation}
Recursive consistency holds on the strength scale: for a block $B$,
\[
q\bigl(C_{\lambda_B}(R_B)\bigr)
=
\frac{\sum_{i\in B}\lambda_iq(R_i)}{\bar{\lambda}_B}.
\]
A weighted average of these block strengths with block weights $\bar{\lambda}_B$ equals the direct strength average $\sum_i\lambda_iq(R_i)/\bar{\lambda}$. Converting back by $q^{-1}$ gives recursion.

For two equal weights, differentiating \eqref{eq:binary-equal} gives
\[
\partial_x C_{(1,1)}(x,y)=\frac{q(x)}{q(x)+q(y)},
\qquad
\partial_y C_{(1,1)}(x,y)=\frac{q(y)}{q(x)+q(y)}.
\]
Their ratio is $q(x)/q(y)=10^{(x-y)/400}=E(x,y)/(1-E(x,y))$, so marginal Elo-strength consistency holds.

Finally, \eqref{eq:main-C} depends only on normalized weights. Conversely, if two normalized weight vectors $w$ and $v$ produce the same rule for all $R$, then
\[
\sum_i w_iq(R_i)=\sum_i v_iq(R_i)
\qquad\text{for all }R.
\]
Set all coordinates except $i$ equal to zero and let $R_i=r$. Since $q(0)=1$,
\[
w_iq(r)+(1-w_i)=v_iq(r)+(1-v_i)
\qquad\text{for all }r.
\]
Choosing any $r\ne0$ gives $w_i=v_i$. Hence normalized weights are unique.
\end{proof}

\section{Properties}\label{sec:properties}

\subsection{Induced pairwise probabilities}\label{subsec:probabilities}

The characterized combined rating induces the scalar Elo probability
\begin{equation*}\label{eq:prob-final}
P_\lambda(R,S)
=
\frac{1}{1+10^{(C_\lambda(S)-C_\lambda(R))/400}}.
\end{equation*}
Using Theorem~\ref{thm:main}, this becomes
\begin{equation}\label{eq:prob-strength}
P_\lambda(R,S)
=
\frac{\sum_i\lambda_i10^{R_i/400}}
{\sum_i\lambda_i10^{R_i/400}+\sum_i\lambda_i10^{S_i/400}}.
\end{equation}
Thus the final probability has the Bradley--Terry form, with aggregate strength equal to the weighted average of component Elo strengths.

Formula \eqref{eq:prob-strength} also has a Luce-choice interpretation. Give player $R$'s coordinate $i$ mass $\lambda_i10^{R_i/400}$, and similarly for player $S$. Put all player-coordinate masses into one pool. A component is selected with probability proportional to its mass, and a player wins if one of that player's components is selected. Then the probability that $R$ wins is exactly \eqref{eq:prob-strength}.

\begin{proposition}[Uniqueness under component-level pooling]\label{prop:component}
Suppose
\[
P(R,S)=
\frac{\sum_i\lambda_i10^{R_i/400}}
{\sum_i\lambda_i10^{R_i/400}+\sum_i\lambda_i10^{S_i/400}},
\]
and suppose there exists a scalar Elo rating $C$ such that
\[
P(R,S)=\frac{1}{1+10^{(C(S)-C(R))/400}}
\]
for all rating vectors $R,S$. Then
\[
C(R)=400\log_{10}\left(\sum_i\lambda_i10^{R_i/400}\right)+K
\]
for some constant $K$. If same-scale normalization is imposed, then $C$ is given by \eqref{eq:main-C}.
\end{proposition}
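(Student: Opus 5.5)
The plan is to convert both probability expressions into odds and compare them. Write $A(R)=\sum_i\lambda_i10^{R_i/400}$, which is strictly positive for every $R$. We work in the regime $P(R,S)\in(0,1)$, which holds for all $R,S$ under both hypotheses. For the pooled formula, the odds of $R$ against $S$ are $P/(1-P)=A(R)/A(S)$. For the scalar formula, by \eqref{eq:elo-odds} the odds are $10^{(C(R)-C(S))/400}=q(C(R))/q(C(S))$.

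Equating the two expressions gives, for all rating vectors $R,S$,
\[
\frac{q(C(R))}{A(R)}=\frac{q(C(S))}{A(S)}.
\]
The left side depends only on $R$ and the right side only on $S$. So the ratio $q(C(R))/A(R)$ is a positive constant $\kappa$, independent of $R$. Fixing any reference profile $S_0$ makes this explicit: $\kappa=q(C(S_0))/A(S_0)$. Taking $q^{-1}=400\log_{10}$ then yields
\[
C(R)=400\log_{10}A(R)+K,\qquad K=400\log_{10}\kappa,
\]
which is the first claim. No regularity of $C$ is needed, since the argument is purely algebraic, pointwise in $R$ and $S$.

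For the second claim, impose same-scale normalization $C(r\one)=r$ for all $r$. Since $A(r\one)=\bar{\lambda}\,10^{r/400}$, we get
\[
r=400\log_{10}\bar{\lambda}+r+K,
\]
so $K=-400\log_{10}\bar{\lambda}$. Substituting this back gives $C(R)=400\log_{10}\bigl(A(R)/\bar{\lambda}\bigr)$, which is exactly \eqref{eq:main-C}.

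The only step needing care is the separation of variables. It uses the fact that both identities hold for \emph{all} pairs $(R,S)$, including $S=S_0$ fixed while $R$ varies. It also uses that the odds transformation $p\mapsto p/(1-p)$ is injective on $(0,1)$, so equality of the probabilities implies equality of the odds. There is no real obstacle beyond these points; the proposition is essentially a direct computation.
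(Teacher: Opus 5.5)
Your proposal is correct and follows essentially the same route as the paper: both equate the pooled odds $A(R)/A(S)$ with the scalar Elo odds $10^{(C(R)-C(S))/400}$, separate variables to get $10^{C(R)/400}=\kappa A(R)$ for a positive constant $\kappa$, and use same-scale normalization to obtain $\kappa=\bar{\lambda}^{-1}$. Your explicit remarks that the odds map is injective on $(0,1)$ and that no regularity of $C$ is needed are correct additions that the paper leaves implicit.
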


\begin{proof}
Let $H(R)=\sum_i\lambda_i10^{R_i/400}$. The pooling probability gives odds $H(R)/H(S)$, while the scalar Elo formula gives odds $10^{(C(R)-C(S))/400}$. Hence
\[
\frac{10^{C(R)/400}}{H(R)}=
\frac{10^{C(S)/400}}{H(S)}
\]
for all $R,S$. Therefore $10^{C(R)/400}=AH(R)$ for some $A>0$, which gives the first claim. Same-scale normalization gives $A=(\sum_i\lambda_i)^{-1}$.
\end{proof}

\begin{proposition}[Basic properties]\label{prop:basic}
Let $C_\lambda$ be given by \eqref{eq:main-C}. Then:
\begin{enumerate}[label=(\roman*)]
\item \textup{Translation equivariance.} For every $\delta\in\R$,
\[
C_\lambda(R+\delta\one)=C_\lambda(R)+\delta.
\]
\item \textup{Internality.} For every $R\in\R^n$,
\[
\min_iR_i\le C_\lambda(R)\le\max_iR_i.
\]
\item \textup{Endogenous marginal weights.}
\[
\frac{\partial C_\lambda(R)}{\partial R_j}
=
\frac{\lambda_jq(R_j)}{\sum_i\lambda_iq(R_i)},
\qquad
\sum_{j=1}^n\frac{\partial C_\lambda(R)}{\partial R_j}=1.
\]
\end{enumerate}
\end{proposition}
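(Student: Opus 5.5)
All three parts follow directly from the closed form \eqref{eq:main-C}, most cleanly read on the strength scale $q(C_\lambda(R))=\sum_i\lambda_iq(R_i)/\bar\lambda$. Throughout I will use that $q$ is multiplicative under translation, $q(r+\delta)=q(\delta)\,q(r)$ since $q(r)=10^{r/400}$, and that $q$ and $q^{-1}$ are strictly increasing.

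For (i), I will substitute $R+\delta\one$ into the strength average and pull out the common factor:
\[
\sum_i\lambda_iq(R_i+\delta)=q(\delta)\sum_i\lambda_iq(R_i).
\]
Applying $q^{-1}=400\log_{10}$ then turns this multiplicative factor into the additive shift $\delta$, which is the claim. For (ii), I will note that each $q(R_i)$ lies between $q(\min_iR_i)$ and $q(\max_iR_i)$. The normalized weights $\lambda_i/\bar\lambda$ are positive and sum to one, so the average $\sum_i\lambda_iq(R_i)/\bar\lambda$ is a convex combination and lies in the same interval. Applying the increasing map $q^{-1}$ gives $\min_iR_i\le C_\lambda(R)\le\max_iR_i$. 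Alternatively, (ii) follows from monotonicity together with normalization: compare $R$ with $(\min_iR_i)\one$ and $(\max_iR_i)\one$.

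For (iii), the derivative formula is exactly \eqref{eq:derivative}, already obtained in Step~5 of the proof of Theorem~\ref{thm:main}. To re-derive it, I will differentiate $400\log_{10}\bigl(\sum_i\lambda_iq(R_i)\bigr)$ using $q'(r)=q(r)\ln 10/400$, so that the factors $400$ and $\ln10$ cancel. Summing the resulting expressions over $j$ gives a numerator equal to the denominator, hence the sum is $1$. As a cross-check, the sum identity is also the derivative of (i) in $\delta$ at $\delta=0$.

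I do not expect any real obstacle. The only point needing care is the constant bookkeeping in the derivative, and that has already been done in \eqref{eq:derivative}.
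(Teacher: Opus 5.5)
Your proposal is correct and matches the paper's proof: translation equivariance from $q(R_i+\delta)=q(\delta)q(R_i)$, internality because $q(C_\lambda(R))$ is a weighted mean of the $q(R_i)$ and $q^{-1}$ is increasing, and part (iii) by citing \eqref{eq:derivative} and summing over $j$. Your extra arguments are also valid but not needed: internality via monotonicity plus normalization, and the sum identity as the $\delta$-derivative of (i) at $\delta=0$.
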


\begin{proof}
Translation equivariance follows from $q(R_i+\delta)=q(\delta)q(R_i)$. Internality follows because $q(C_\lambda(R))$ is a weighted arithmetic mean of the positive numbers $q(R_i)$, and $q^{-1}$ is increasing. The derivative formula is \eqref{eq:derivative}; summing over $j$ gives one.
\end{proof}

The last property shows that primitive weights and marginal weights are different. The primitive weights $\lambda_i$ are fixed policy parameters, while the effective marginal influence of coordinate $j$ is proportional to $\lambda_jq(R_j)$.

\begin{remark}[Zero-weight limits]
The axioms are stated for strictly positive weights. For the characterized rule, zero weights can be admitted by limits: a coordinate with zero weight simply drops out of the strength average. The representation theorem itself is stated on the positive-weight domain.
\end{remark}

\section{Comparisons}\label{sec:lottery-and-alternatives}

\subsection{Random-format competitions}\label{sec:lottery}

A \emph{random-format competition} is a competition in which a format is first selected according to exogenous probabilities and the game is then evaluated using the rating for the selected format. If format $i$ is selected with probability $\pi_i$, and if
\[
p_i(R,S)=\frac{10^{R_i/400}}{10^{R_i/400}+10^{S_i/400}}
\]
is the format-specific Elo expected score of $R$ against $S$, then the expected score for the random-format competition is
\begin{equation*}\label{eq:format-lottery}
P_\pi(R,S)=\sum_{i=1}^n\pi_i p_i(R,S).
\end{equation*}
This construction answers a different question from the combined rating. It averages probabilities after the format has been selected. The combined rating averages strengths first and then applies the Elo formula:
\begin{equation*}\label{eq:PC-strength}
P_C(R,S)=
\frac{\sum_i\lambda_i10^{R_i/400}}
{\sum_i\lambda_i10^{R_i/400}+\sum_i\lambda_i10^{S_i/400}}.
\end{equation*}

The two expressions can coincide in special cases, but not in general. The combined-rating probability can be written as
\begin{equation}\label{eq:endogenous-weights}
P_C(R,S)=\sum_i w_i(R,S)p_i(R,S),
\end{equation}
where
\begin{equation}\label{eq:wi}
w_i(R,S)=
\frac{\lambda_i(10^{R_i/400}+10^{S_i/400})}
{\sum_j\lambda_j(10^{R_j/400}+10^{S_j/400})}.
\end{equation}
The weights $w_i(R,S)$ are endogenous matchup weights. Even when the primitive weights $\lambda_i$ are equal, the weights in \eqref{eq:endogenous-weights} generally are not equal to $1/n$.

\begin{proposition}[A random-format cycle]\label{prop:cycle}
A random-format lottery need not admit a scalar-rating representation. Under uniform selection over three formats, there exist profiles $X,Y,Z$ such that $X$ is favored over $Y$, $Y$ is favored over $Z$, and $Z$ is favored over $X$.
\end{proposition}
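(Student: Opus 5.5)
The plan is to build a Condorcet-style cycle. Each format acts as a ``voter'' that ranks the three profiles, and the rankings are rotated across formats. Fix a spacing $D>0$ and set
\[
X=(3D,\,D,\,2D),\qquad Y=(2D,\,3D,\,D),\qquad Z=(D,\,2D,\,3D).
\]
Format~1 then ranks $X>Y>Z$, format~2 ranks $Y>Z>X$, and format~3 ranks $Z>X>Y$. Under uniform selection $\pi_i=1/3$, each pairwise comparison is won by its favorite in two of the three formats. The aim is to check that these two format wins outweigh the single format loss in expected score.

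Write $e(d)=E(d,0)=1/(1+10^{-d/400})$, so that $p_i(R,S)=e(R_i-S_i)$ and $e(-d)=1-e(d)$. For $X$ against $Y$ the format-wise differences are $D$, $-2D$ and $D$. This gives
\[
P_\pi(X,Y)=\tfrac13\bigl(2e(D)+1-e(2D)\bigr).
\]
Because the construction is a cyclic shift of coordinates, the same computation gives exactly the same value for $P_\pi(Y,Z)$ and for $P_\pi(Z,X)$. Each of these pairs again has two format differences equal to $+D$ and one equal to $-2D$. So all three claims reduce to the single inequality
\[
2e(D)-e(2D)>\tfrac12 .
\]
To verify it I will simply pick $D=400$. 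Then $e(400)=10/11$ and $e(800)=100/101$, and $20/11-100/101\approx 0.83>1/2$. Alternatively, as $D\to\infty$ the left-hand side tends to $1$, so the inequality holds for every sufficiently large $D$. Hence $X$ is favored over $Y$, $Y$ over $Z$, and $Z$ over $X$.

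Finally, I will turn the cycle into the non-representability claim. Suppose a scalar rating $C$ satisfied $P_\pi(R,S)=E(C(R),C(S))$. Then $P_\pi(R,S)>1/2$ would force $C(R)>C(S)$, because $E$ is strictly increasing in the rating difference. The cycle would then give $C(X)>C(Y)>C(Z)>C(X)$, which is a contradiction. The only step that needs care is the choice of the profiles, namely making the three comparisons cyclically symmetric so that one inequality covers all of them. After that, everything is a short computation.
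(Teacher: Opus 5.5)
Your proof is correct and is essentially the paper's argument: a rotated Condorcet-style cycle in which every pairwise comparison has format differences $(+D,+D,-2D)$ in some order. With $D=400$ your profiles are the paper's profiles shifted by $1600$ points and with formats 2 and 3 swapped, so you obtain the same value $677/1111>1/2$. Your explicit final step, that $P_\pi>1/2$ would force $C(R)>C(S)$, only spells out the paper's closing sentence; and with $t=10^{-D/400}$ your key inequality reduces to $(1-t)^3>0$, so it holds for every $D>0$, not just for large $D$.
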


\begin{proof}
Let
\[
X=(2800,2400,2000),\quad
Y=(2400,2000,2800),\quad
Z=(2000,2800,2400).
\]
Under uniform format selection, comparing $X$ to $Y$ gives rating differences $(400,400,-800)$. Therefore
\[
P_\unif(X\text{ beats }Y)
=
\frac13\left(\frac{10}{11}+\frac{10}{11}+\frac{1}{101}\right)
=
\frac{677}{1111}>
\frac12.
\]
By cyclic symmetry, $P_\unif(Y\text{ beats }Z)>1/2$ and $P_\unif(Z\text{ beats }X)>1/2$. No scalar rating can represent these strict comparisons.
\end{proof}

For the same three profiles, the equal-weight combined rating assigns the same scalar rating to all three players, because each profile contains the same strength values $\{10^7,10^6,10^5\}$, only in a different order. The combined-rating probabilities are therefore all $1/2$. This illustrates that a random-format lottery and a combined-strength rating are different objects.

In practice, the two constructions are often close when the coordinate ratings of the players do not differ too sharply across formats. In such cases the matchup-dependent endogenous weights $w_i(R,S)$ remain close to the exogenous lottery probabilities $\pi_i$, so the combined-rating probability approximates the corresponding random-format expected score.

\subsection{Alternative scalar summaries}\label{sec:other-summaries}

One simple alternative is the weighted arithmetic average of ratings,
\[
\bar R(R)=\sum_i\alpha_iR_i,
\qquad
\alpha_i>0,
\qquad
\sum_i\alpha_i=1.
\]
On the strength scale, this is the weighted geometric mean
\[
10^{\bar R(R)/400}
=
\prod_i10^{\alpha_iR_i/400}.
\]
It is recursively consistent if the rating scale itself is treated as the averaging scale, but it violates marginal Elo-strength consistency: in the equal-weight two-coordinate case, its marginal ratio is always one, not the Elo odds.

More generally, one may use a power mean of Elo strengths,
\[
G_p(q)=\left(\frac{\sum_i\lambda_iq_i^p}{\sum_i\lambda_i}\right)^{1/p},
\qquad p\ne0,
\]
with the geometric mean obtained as $p\to0$. The associated rating is $400\log_{10}G_p(q(R))$. These means are recursively consistent, but their equal-weight binary marginal ratio is
\[
\left(\frac{10^{x/400}}{10^{y/400}}\right)^p
=10^{p(x-y)/400}.
\]
Marginal Elo-strength consistency requires $p=1$. Other values of $p$ encode different assumptions about substitutability across coordinates.

\subsection{Other applications and scope}\label{sec:applications}

The same framework can aggregate role-specific ratings. In chess, White and Black are strategically different roles, so one could maintain
\[
R=(R_W,R_B).
\]
The displayed role-aggregate rating with equal weights would be
\[
C(R_W,R_B)=400\log_{10}\left(
\frac{10^{R_W/400}+10^{R_B/400}}{2}
\right).
\]
If player $A$ has White against player $B$, the role-specific expected score for $A$ is
\[
E_A=
\frac{10^{R_W(A)/400}}
{10^{R_W(A)/400}+10^{R_B(B)/400}}.
\]
A standard Elo-style update with factor $K$ updates only the played role coordinates:
\[
R_W(A)^+=R_W(A)+K(s_A-E_A),
\qquad
R_B(B)^+=R_B(B)+K(E_A-s_A),
\]
where $s_A\in\{0,1/2,1\}$ is White's score. The unplayed coordinates remain unchanged, and each player's displayed scalar rating is recomputed from the role coordinates.

\section{Application: combining classical, rapid, and blitz ratings}\label{sec:crb-application}

For a chess application, take the three coordinates to be classical, rapid, and blitz:
\[
R=(R_c,R_r,R_b).
\]
With weights $\lambda=(\lambda_c,\lambda_r,\lambda_b)$, the combined rating is
\begin{equation}\label{eq:crb-weighted}
C_{c,r,b}(R)
=
400\log_{10}\left(
\frac{\lambda_c10^{R_c/400}+\lambda_r10^{R_r/400}+\lambda_b10^{R_b/400}}
{\lambda_c+\lambda_r+\lambda_b}
\right).
\end{equation}
This is the player's \emph{Combined Rating}: a single Elo rating corresponding to combined strength across classical, rapid, and blitz.

The construction is simple in operational terms. Each format rating is translated into an Elo strength
\[
q_f=10^{R_f/400},
\]
the three strengths are averaged, and the result is converted back using $400\log_{10}$. The division by three in the equal-weight case is the normalization that makes $C(R,R,R)=R$.

The interpretation is equally important. Formula \eqref{eq:crb-weighted} measures combined classical--rapid--blitz strength. It is not a replacement for the classical rating when a classical game is being predicted, and it is not an estimate of what a player's classical rating would be after using rapid and blitz results as auxiliary evidence. In particular, a URS-type system, when understood as measuring ``classical'' strength using information from non-classical games, has a different target from the combined rating here. The present construction treats the time controls as separate dimensions and asks for one Elo-style aggregate of those dimensions.

Equivalently, the Combined Rating answers the question:
\begin{quote}
If we want a single Elo rating that summarizes how strong this player is overall across classical, rapid, and blitz, what should that rating be?
\end{quote}
It assigns each player a single scalar rating, permits standard Elo predictions from rating differences, and is nonlinear in the original format ratings. The resulting predictions are often close to those of a corresponding random-format lottery when the players' ratings do not differ too sharply across formats.

As a numerical illustration, consider the classical, rapid, and blitz ratings of Carlsen $(A)$ and Gukesh $(B)$:
\[
A=(2840,2832,2869),
\qquad
B=(2732,2692,2646).
\]
With equal weights,
\[
C(A)=2847.74,
\qquad
C(B)=2693.52.
\]
The combined rating difference is
\[
C(A)-C(B)=154.22.
\]
Applying the ordinary Elo logistic formula to this difference gives
\[
\frac{1}{1+10^{-154.22/400}}=0.7084.
\]
Thus, in this combined-strength model, $A$ is about a $70.84\%$ favorite against $B$ in an overall multi-format comparison. This example shows the basic use of the Combined Rating: it converts three separate ratings into one number that can be used in the standard Elo prediction formula.

For the same ratings, the format-specific expected scores are
\[
p_c=0.6506,
\qquad
p_r=0.6912,
\qquad
p_b=0.7831.
\]
A uniform random-format lottery would therefore give
\[
P_\unif(A,B)=\frac{0.6506+0.6912+0.7831}{3}=0.7083.
\]
The equal-weight combined-rating prediction is $0.7084$, so in this example the two predictions differ by only about $0.0001$. This closeness is not automatic; here the total strength masses
\[
q(A_f)+q(B_f),\qquad f\in\{c,r,b\},
\]
are relatively balanced across formats, so the endogenous weights in \eqref{eq:wi} are close to the uniform lottery weights. The approximation would be poorer in matchups where one format has a much larger total strength mass and also a very different format-specific expected score.

The choice of weights is a policy choice about the object being measured. Once the weights are chosen, Theorem~\ref{thm:main} fixes the unique normalized combined Elo rating satisfying the axioms.

Table~\ref{tab:top20-combined} gives an illustrative ranking using the live rating values reported by 2700chess.com for April 19, 2026. The table shows that the Combined Rating can differ substantially from the classical ranking, because it incorporates rapid and blitz strength alongside classical strength under equal weighting across the three formats.

\begin{table}[H]
\centering
\small
\setlength{\tabcolsep}{3.5pt}
\begin{tabular*}{\textwidth}{@{\extracolsep{\fill}}rrlrrrr@{}}
\toprule
Combined rank & Classical rank & Name & Combined & Classical & Rapid & Blitz \\
\midrule
1 & 1 & Carlsen, Magnus & 2848 & 2840 & 2832 & 2869 \\
2 & 2 & Nakamura, Hikaru & 2795 & 2792 & 2742 & 2838 \\
3 & 8 & Firouzja, Alireza & 2771 & 2759 & 2755 & 2796 \\
4 & 4 & Abdusattorov, Nodirbek & 2760 & 2780 & 2703 & 2785 \\
5 & 11 & Erigaisi, Arjun & 2757 & 2751 & 2741 & 2776 \\
6 & 3 & Caruana, Fabiano & 2757 & 2788 & 2727 & 2749 \\
7 & 9 & So, Wesley & 2756 & 2754 & 2705 & 2798 \\
8 & 19 & Nepomniachtchi, Ian & 2741 & 2729 & 2726 & 2765 \\
9 & 24 & Vachier-Lagrave, Maxime & 2739 & 2717 & 2735 & 2761 \\
10 & 21 & Aronian, Levon & 2731 & 2724 & 2730 & 2740 \\
11 & 5 & Sindarov, Javokhir & 2728 & 2776 & 2727 & 2662 \\
12 & 10 & Wei, Yi & 2727 & 2753 & 2726 & 2698 \\
13 & 13 & Duda, Jan-Krzysztof & 2724 & 2739 & 2683 & 2743 \\
14 & 59 & Dubov, Daniil & 2721 & 2654 & 2687 & 2792 \\
15 & 32 & Fedoseev, Vladimir & 2719 & 2703 & 2690 & 2756 \\
16 & 6 & Giri, Anish & 2713 & 2767 & 2689 & 2666 \\
17 & 22 & Nihal Sarin & 2712 & 2723 & 2689 & 2723 \\
18 & 77 & Artemiev, Vladislav & 2708 & 2641 & 2735 & 2733 \\
19 & 44 & Leko, Peter & 2707 & 2676 & 2700 & 2738 \\
20 & 15 & Praggnanandhaa R & 2700 & 2733 & 2663 & 2698 \\
\bottomrule
\end{tabular*}
\caption{Illustrative equal-weight Combined Ratings for the top 20 players as of April 19, 2026. Combined ratings are rounded to the nearest point.}
\label{tab:top20-combined}
\end{table}

\appendix

\section{Independence of the substantive axioms}\label{app:independence}

This appendix explains which assumptions do substantive work in the characterization. Anonymity and weight-scale invariance are not included among the axioms because they follow from Lemmata~\ref{lem:anonymity} and~\ref{lem:scale-invariance}. The regularity and monotonicity assumptions are retained throughout as technical conditions. Thus the independence result concerns the three substantive axioms: same-scale normalization, recursive consistency, and marginal Elo-strength consistency.

\begin{proposition}[Independence of the substantive axioms]\label{prop:independence}
Retaining the regularity and monotonicity requirements, each of Assumptions~\ref{ass:normalization}, \ref{ass:recursive}, and \ref{ass:marginal-elo} is independent of the other two.
\end{proposition}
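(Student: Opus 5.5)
The plan is to exhibit three rules, each satisfying Assumption~\ref{ass:regularity} and exactly two of the three substantive axioms. For each rule, I will verify the two retained axioms and show that the third fails at a single explicit profile.

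\emph{Dropping normalization.} Take
\[
C_\lambda(R)=400\log_{10}\Bigl(\sum_i\lambda_i10^{R_i/400}\Bigr),
\]
that is, the characterized rule without the denominator.
\begin{itemize}
\item Regularity and monotonicity are immediate.
\item Marginal Elo-strength consistency holds, because the gradient ratio for $\lambda=(1,1)$ is $q(x)/q(y)$, exactly as in \eqref{eq:derivative}.
\item Recursive consistency holds: $q(C_{\lambda_B}(R_B))=\sum_{i\in B}\lambda_iq(R_i)$ is a block sum. Re-aggregating with block weights $\bar\lambda_B$ gives $\sum_B\bar\lambda_B\sum_{i\in B}\lambda_iq(R_i)$, which differs from the direct sum. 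So this unnormalized candidate fails recursion.
\end{itemize}
I would therefore switch to a candidate that is normalization-free but keeps recursion. One option is $C_\lambda(R)=c+q^{-1}\bigl(\sum_i\lambda_iq(R_i)/\bar\lambda\bigr)$ with constant $c\neq0$. This also fails recursion, because the offset accumulates across levels.

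A cleaner choice is to change the transform. Take $\phi(r)=q(r)+\beta$ with $\beta>0$, and set
\[
C_\lambda(R)=q^{-1}\Bigl(\frac{\sum_i\lambda_iq(R_i)}{\bar\lambda}\Bigr)+\kappa(R),
\]
where $\kappa$ is chosen to preserve recursion. This is the step I expect to be the main obstacle: the proof of Theorem~\ref{thm:main} uses normalization only to calibrate $h$ in Step~1 and in the $m=1$ base case. So I need a rule that is recursive without satisfying $C(r)=r$ on singletons.

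My first attempt is to apply $q^{-1}$ to a scaled weighted mean: $C_\lambda(R)=q^{-1}\bigl(\bar\lambda^{\,\gamma}\cdot\text{mean}\bigr)$. This also breaks recursion. Since Lemma~\ref{lem:anonymity} shows recursion already forces singleton blocks to behave well, I would instead look for a rule that fails normalization only for $n\ge2$. An example is $C_\lambda(R)=f\bigl(q^{-1}(\sum_i\lambda_iq(R_i)/\bar\lambda)\bigr)$ with $f$ increasing and not the identity, restricted so that recursion holds. I would check whether recursion pins $f$ to be idempotent, i.e.\ $f\circ f=f$. Combined with strict monotonicity, idempotence forces $f=\mathrm{id}$. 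If that happens, normalization is in fact implied by the others, and the plan must be revised to a non-smooth or weight-dependent counterexample. Finding a valid example here is where I would spend most effort.

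\emph{Dropping recursive consistency.} Define $C_\lambda$ by \eqref{eq:main-C} for $n\le2$. For $n\ge3$, replace it by the power mean $G_p$ of strengths with some $p\neq1$. Normalization and marginal consistency (which involves only $n=2$) hold, and regularity holds. Recursion fails at a three-coordinate profile split as $\{1,2\},\{3\}$, by a direct numerical check.

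\emph{Dropping marginal Elo-strength consistency.} Use the arithmetic average $\bar R(R)=\sum_i\lambda_iR_i/\bar\lambda$, or any power mean $G_p$ with $p\ne1$. Section~\ref{sec:other-summaries} already records that these satisfy recursion and normalization, are smooth and strictly increasing, and have binary marginal ratio $10^{p(x-y)/400}\neq10^{(x-y)/400}$.
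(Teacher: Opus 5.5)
Your counterexamples for dropping marginal Elo-strength consistency and for dropping recursive consistency are fine. They essentially match the paper's: the paper uses the arithmetic mean of ratings (the $p\to0$ member of your power-mean family) in both roles. For the second, the failure does not need a numerical check. Along the split $\{1,2\},\{3\}$, the binary branch always reproduces the arithmetic strength mean, while the direct $n=3$ value is $G_p$, and these differ whenever the strengths are not all equal.

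The genuine gap is the normalization case. You never produce a rule, and you end by entertaining the possibility that normalization is implied by the other two axioms, which is false. Your partial analysis is correct as far as it goes. Applying recursion to the one-block partition when $n=1$ gives $C_{(\lambda)}\circ C_{(\lambda)}=C_{(\lambda)}$, and strict monotonicity then forces $C_{(\lambda)}(r)=r$. So any counterexample must be the identity on singletons and violate normalization only for $n\ge2$. This is why your constant offset, your $\bar\lambda^{\gamma}$ rescaling and your $f\circ(\text{strength mean})$ family all collapse. A non-smooth example is not an escape route either, since the proposition retains the regularity assumption.

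The missing idea is an additive term on the strength scale that is independent of $R$, depends on the weights, and does not accumulate under grouping. Write $w=\lambda/\bar\lambda$, $W_B=\sum_{i\in B}w_i$ and $w^B=(w_i/W_B)_{i\in B}$. For the ansatz $C_\lambda(R)=q^{-1}\bigl(\sum_i w_iq(R_i)+F(w)\bigr)$, recursion holds if and only if $F(w)=F(W)+\sum_B W_BF(w^B)$. This is exactly the grouping identity of Shannon entropy.

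The paper accordingly takes $F=\eta H$ with $\eta>0$ and $H(w)=-\sum_i w_i\log w_i$, and checks each requirement:
\begin{itemize}
\item Because the entropy term is constant in $R$, the partials stay continuous and positive, and the equal-weight binary marginal ratio is still $q(x)/q(y)$.
\item The grouping identity makes block-then-aggregate reproduce the direct strength exactly.
\item $H=0$ when $n=1$, which is consistent with the idempotence constraint above.
\item For $n\ge2$, $H(w)>0$, so $C_\lambda(r\one)=q^{-1}\bigl(q(r)+\eta H(w)\bigr)>r$ and normalization fails.
\end{itemize}
Without this or an equivalent construction, your proposal establishes only two of the three independence claims.
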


\begin{proof}
We give one counterexample for each omitted assumption.

\medskip
\noindent\text{1. Marginal Elo-strength consistency is independent.}
Omit Assumption~\ref{ass:marginal-elo} and define
\[
C_\lambda(R)=\sum_{i=1}^n\frac{\lambda_i}{\bar{\lambda}}R_i.
\]
This weighted arithmetic mean of ratings is continuous, continuously differentiable, strictly increasing, and satisfies same-scale normalization. It also satisfies recursive consistency: block averages recombine to the original weighted average. But for two equal weights,
\[
\partial_x C_{(1,1)}(x,y)=\partial_y C_{(1,1)}(x,y)=\frac12,
\]
so the marginal ratio is always one, not $10^{(x-y)/400}$ unless $x=y$. Hence the marginal axiom is independent.

\medskip
\noindent\text{2. Recursive consistency is independent.}
Omit Assumption~\ref{ass:recursive}. Define
\[
C_\lambda(R)=
\begin{cases}
q^{-1}\left(\dfrac{\sum_i\lambda_iq(R_i)}{\sum_i\lambda_i}\right),
& n=1\text{ or }n=2,\\[1.2em]
\dfrac{\sum_i\lambda_iR_i}{\sum_i\lambda_i},
& n\ge3.
\end{cases}
\]
This rule is continuous, continuously differentiable, strictly increasing, and satisfies same-scale normalization. It also satisfies the marginal axiom because the two-dimensional equal-weight branch is exactly the strength-average rule.

Recursive consistency fails. With equal weights and $R=(0,400,400)$, direct three-dimensional aggregation uses the arithmetic-rating branch and gives
\[
C_{(1,1,1)}(0,400,400)=\frac{800}{3}.
\]
If the first two coordinates are combined first, their block rating is
\[
C_{(1,1)}(0,400)=q^{-1}\left(\frac{1+10}{2}\right)=q^{-1}\left(\frac{11}{2}\right).
\]
Combining this block with the third coordinate using block weights $(2,1)$ gives strength
\[
\frac{2\cdot(11/2)+1\cdot10}{3}=7,
\]
and hence rating $q^{-1}(7)=400\log_{10}7$, which is not $800/3$. Hence recursive consistency is independent.

\medskip
\noindent\text{3. Same-scale normalization is independent.}
Omit Assumption~\ref{ass:normalization}. Let
\[
w_i=\frac{\lambda_i}{\bar{\lambda}},
\qquad
H(w)=-\sum_{i=1}^n w_i\log w_i,
\]
where $H(w)$ is Shannon entropy, with natural logarithm. Fix $\eta>0$, and define
\[
Q_\lambda(R)=\sum_{i=1}^n w_iq(R_i)+\eta H(w),
\qquad
C_\lambda(R)=q^{-1}\bigl(Q_\lambda(R)\bigr).
\]
This rule is well-defined, continuous, continuously differentiable in $R$, and strictly increasing in each coordinate, because
\[
\partial_j C_\lambda(R)
=
\frac{w_jq(R_j)}{\sum_i w_iq(R_i)+\eta H(w)}>0.
\]

It satisfies marginal Elo-strength consistency. For two equal weights,
\[
Q_{(1,1)}(x,y)=\frac{q(x)+q(y)}{2}+\eta\log2,
\]
and the entropy term is constant in $x$ and $y$. Therefore the marginal ratio is $q(x)/q(y)=10^{(x-y)/400}$.

It also satisfies recursive consistency. For a partition $\mathcal P$, write
\[
W_B=\sum_{i\in B}w_i,
\qquad
w_i^B=\frac{w_i}{W_B}\quad(i\in B).
\]
The block-level strength is
\[
Q_{\lambda_B}(R_B)
=
\sum_{i\in B}w_i^Bq(R_i)+\eta H(w^B).
\]
Aggregating block-level ratings using normalized block weights $W_B$ gives strength
\begin{align*}
&\sum_{B\in\mathcal P}W_B Q_{\lambda_B}(R_B)+\eta H(W)\\
&\qquad=
\sum_i w_iq(R_i)
+
\eta\left(\sum_{B\in\mathcal P}W_BH(w^B)+H(W)\right).
\end{align*}
Shannon entropy satisfies the grouping identity
\[
H(w)=H(W)+\sum_{B\in\mathcal P}W_BH(w^B),
\]
so the recursively aggregated strength equals $Q_\lambda(R)$.

Same-scale normalization fails. If all coordinates equal $r$, then
\[
Q_\lambda(r\one)=q(r)+\eta H(w).
\]
For any nontrivial positive weight vector with $n\ge2$, $H(w)>0$. Hence $C_\lambda(r\one)>r$. Thus normalization is independent.
\end{proof}

\begin{remark}[Role of regularity]
The counterexamples in Proposition~\ref{prop:independence} preserve the stated regularity and monotonicity requirements. Continuity in weights is used in Theorem~\ref{thm:main} to extend the formula from dyadic rational weights to arbitrary positive weights. Differentiability and positive partial derivatives make the marginal ratio well-defined and allow the marginal axiom to be translated into a statement about level curves.
\end{remark}


\begin{thebibliography}{99}

\bibitem[Bradley and Terry(1952)]{BradleyTerry1952}
Bradley, Ralph Allan, and Milton E. Terry. 1952.
``Rank Analysis of Incomplete Block Designs: I. The Method of Paired Comparisons.''
\emph{Biometrika} 39(3/4): 324--345.
\url{https://doi.org/10.1093/biomet/39.3-4.324}.


\bibitem[Elo(1978)]{Elo1978}
Elo, Arpad E. 1978.
\emph{The Rating of Chessplayers, Past and Present}.
New York: Arco.

\bibitem[Glickman(1995)]{Glickman1995}
Glickman, Mark E. 1995.
``A Comprehensive Guide to Chess Ratings.''
\emph{American Chess Journal} 3: 59--102.
\url{https://www.glicko.net/research/acjpaper.pdf}.

\bibitem[Luce(1959)]{Luce1959}
Luce, R. Duncan. 1959.
\emph{Individual Choice Behavior: A Theoretical Analysis}.
New York: Wiley.

\bibitem[Skaperdas(1996)]{Skaperdas1996}
Skaperdas, Stergios. 1996.
``Contest Success Functions.''
\emph{Economic Theory} 7(2): 283--290.
\url{https://doi.org/10.1007/BF01213906}.


\bibitem[Sonas(2002)]{Sonas2002}
Sonas, Jeff. 2002.
``The Sonas Rating Formula -- Better than Elo?''
ChessBase, October 22.
\url{https://en.chessbase.com/post/the-sonas-rating-formula-better-than-elo}.
Accessed May 8, 2026.

\bibitem[Tullock(1980)]{Tullock1980}
Tullock, Gordon. 1980.
``Efficient Rent Seeking.''
In \emph{Toward a Theory of the Rent-Seeking Society}, edited by James M. Buchanan, Robert D. Tollison, and Gordon Tullock, 97--112.
College Station: Texas A\&M University Press.

\bibitem[Universal Rating System(2017)]{URS2017}
Universal Rating System. 2017.
``Universal Rating System.''
\url{http://universalrating.com/about-us.php}.
Accessed May 8, 2026.

\end{thebibliography}
\end{document}